\documentclass{amsart}
\usepackage[cp1251]{inputenc}
\usepackage[english]{babel}
\usepackage{amsmath}
\usepackage{amssymb}
\usepackage{amsfonts}
\newtheorem{thm}{Theorem}
\begin{document}

\thispagestyle{empty}

 \title[Solving the heat equation in piecewise-homogeneous anisotropic media ]{ Solving the heat equation in piecewise-homogeneous anisotropic media using the multidimensional  Fourier transforms
 }%

\author{Yaremko O.E.}

\maketitle {\small

\begin{quote}
\address{Oleg Emanuilovich Yaremko,
\newline\hphantom{iii}Faculty of physics and mathematics,Penza State University,
\newline\hphantom{iii}str. Krasnaya, 40, 
\newline\hphantom{iii} 440038, Penza, Russia,}
\email{yaremki@mail.ru}
\medskip

\noindent{\bf Abstract. } Multidimensional integral transformations with non-separated variables for problems with discontinuous coefficients are constructed in this work. The coefficient discontinuities focused on the  of parallel hyperplanes. In this work explicit formulas for the kernels in the case of ideal coupling conditions are obtained; the basic identity of the integral transform is proved; technique of integral transforms is developed
\medskip

\noindent{\bf Keywords:} integral transformations,  non-separated variables, coupling conditions
\end{quote} }
	MSCS80A20  	Heat and mass transfer, heat flow; 		MSCS42A38, MSCS42B10   	Fourier and Fourier-Stieltjes transforms and other transforms of Fourier type.

\section{Introduction}
Integral transforms arise in a natural way through the principle of linear superposition in constructing integral representations of solutions of linear differential equations.
First note that the structure of integral transforms with
the relevant variables are determined by the type of differential equation and the kind of media in which the problem is considered. Therefore  decision of integral transforms  are the problem for
mathematical physics piecewise-homogeneous (heterogeneous) media. It is clear this method is an effective for obtaining
the exact solution of boundary-value problems for  piecewise-homogeneous structures mathematical physics.
Integral transforms and their applications are appeared in the mathematical monographs of Uflyand Y.S. [1,2],
Lenuk M.P. [3,4]; Nayda L.S. [4] , Protsenko V.S. [5], etc.
The author together with I.I.Bavrin has proposed integral transforms with non-separate variables for solving multidimensional problems  in the work \cite{yar}.

Let $V$ from $R^{n+1}$  be the half-space
\[
V=\left\{ {\left( {y_1 ,...,y_n ,x} \right)\in R^{n+1}:x>0}
\right\},
\]
then solution of the Dirichlet's problem for the half-space is expressed by Poisson formula takes the form:
\cite{bes}

\[
u(x,y)=\Gamma \left( {\frac{n+1}{2}} \right)\pi
^{-\frac{n+1}{2}}\int\limits_{y=0} {\frac{x}{\left[ (y-\eta )^2+x^2 \right]^{\frac{n+1}{2}}}f(\eta )d\eta } .
\]
Obviously Poisson's kernel  is the form of integral Laplace transform and therefore expansion of the function $f(y)$ for the eigenfunctions of the Laplace operator $\Delta$ is obtained from the  reproduce properties of the Poisson kernel:
$$
f(y)=\mathop {\lim }\limits_{\tau \to 0} \int\limits_0^\infty {\lambda
^{\frac{n}{2}}e^{-\lambda \tau }} \left(\frac{1}{\left( {\sqrt {2\pi }
} \right)^n}\int\limits_{R^n} {\frac{J_{\frac{n-2}{2}} \left( {\lambda
\left| {y-\eta } \right|} \right)}{\left| {y-\eta }
\right|^{^{\frac{n-2}{2}}}}} f\left( \eta \right)d\eta\right)d\lambda ,
$$
here $J_{\nu}$ is Bessel's function of order $\nu$ \cite{bes}.
 We may assume that integral transforms with non- separate variables are defined as follows \cite{yar} on the basis of this expansion:\\
direct integral Fourier transform has the form

\begin{equation} F\left[ f \right]\left( {y,\lambda } \right)=\frac{1}{\left( {\sqrt {2\pi }
} \right)^n}\int\limits_{R^n} {\frac{J_{\frac{n-2}{2}} \left( {\lambda
\left| {y-\eta } \right|} \right)}{\left| {y-\eta }
\right|^{^{\frac{n-2}{2}}}}} f\left( \eta \right)d\eta \equiv \hat {f}\left(
{y,\lambda } \right),
\end{equation}
inverse Fourier integral transform has the form
\begin{equation}
F^{-1} [\hat {f}](y)=
\mathop {\lim }\limits_{\tau \to 0} \int\limits_0^\infty {\lambda
^{\frac{n}{2}}e^{-\lambda \tau }} \hat {f}(y;\lambda )d\lambda \equiv f(y).
\end{equation}

In our case the construction of multi-dimensional analogues for integral transforms (1)-(2) with discontinuous coefficients is the purpose of this research.
\section{One-dimensional integral transforms with discontinuous coefficients}
In this paper integral transforms with discontinuous coefficients are constructed in accordance with  author's work \cite{10}. Let $\varphi \left( {x,\lambda } \right)$ and $\varphi ^\ast \left(
{x,\lambda } \right)$ be eigenfunctions of primal and dual problems
Sturm-Liouville for Fourier operator on sectionally  homogeneous axis $ I_n $,
\[
I_n =\left\{ {x:\;x\in \mathop
U\limits_{j=1}^{n+1} \left( {l_{j-1} ,l_j } \right),\;\,l_0 =-\infty
,\;\,l_{n+1} =\infty ,\;\,l_j <l_{j+1} ,\;\,j=\overline {1,n} } \right\}.
\]
Let us remark that eigenfunction $\varphi \left( {x,\lambda } \right)$,
\[
\varphi \left( {x,\lambda } \right)=\sum\nolimits_{k=2}^n {\theta \left(
{x-l_{k-1} } \right)\,\theta \left( {l_k -x} \right)\,\varphi _k \left(
{x,\lambda } \right)+}
\]
\[
+\,\theta \left( {l_1 -x} \right)\,\varphi _1 \left( {x,\lambda }
\right)+\theta \left( {x-l_n } \right)\,\varphi _{n+1} \left( {x,\lambda }
\right)
\]
is the solution of separated differential equations system
\[
\left( {a_m^2 \frac{d^2}{dx^2}+{\kern 1pt}\lambda ^2} \right)\,\varphi _m
\left( {x,\lambda } \right)=0,\;\;x\in \left( {l_m ,l_{m+1} } \right);\quad
m=1,...,n+1,
\]
by the coupling conditions
\[
\left[ {\alpha _{m1}^k \frac{d}{dx}+\beta _{m1}^k } \right]\varphi _k
=\left[ {\alpha _{m2}^k \frac{d}{dx}+\beta _{m2}^k } \right]\varphi _{k+1}
,
\]
\[
x=l_k ,\;\;k=1,...,n;\;\;m=1,2,
\]
on the boundary conditions
\[
\left. {\varphi _1 } \right|_{x=-\infty } =0\,,\;\,\left. {\;\varphi _{n+1}
} \right|_{x=\infty } =0.
\]
Similarly, the eigenfunction $\varphi ^\ast \left(
{x,\lambda } \right)$,
\[
\varphi ^\ast \left( {\xi ,\lambda } \right)=\sum\nolimits_{k=2}^n {\theta
\left( {\xi -l_{k-1} } \right)\,\theta \left( {l_k -\xi } \right)\,\varphi
_k^\ast \left( {\xi ,\lambda } \right)\,+}
\]
\[
+\theta \left( {l_1 -\xi } \right)\,\varphi _1^\ast \left( {\xi ,\lambda }
\right)+\theta \left( {\xi -l_n } \right)\,\varphi _{n+1}^\ast \left( {\xi
,\lambda } \right)
\]
is the solution  of separate differential equations system
\[
\left( {a_m^2 \frac{d^2}{dx^2}+{\kern 1pt}\lambda ^2} \right)\,\varphi
_m^\ast \left( {x,\lambda } \right)=0,\;\;x\in \left( {l_m ,l_{m+1} }
\right);\quad m=1,...,n+1,
\]
by the coupling conditions
\[
\frac{1}{\Delta _{1,k} }\left[ {\alpha _{m1}^k \frac{d}{dx}+\beta _{m1}^k }
\right]\varphi _k^\ast =\frac{1}{\Delta _{2,k} }\left[ {\alpha _{m2}^k
\frac{d}{dx}+\beta _{m2}^k } \right]\varphi _{k+1}^\ast ,
\quad
x=l_k ,\;\;
\]
where
\[
\Delta _{i,k} =\det \left( {{\begin{array}{*{20}c}
 {\alpha _{1i}^k } \hfill & {\beta _{1i}^k } \hfill \\
 {\alpha _{2i}^k } \hfill & {\beta _{2i}^k } \hfill \\
\end{array} }} \right)
k=1,...,n;\;\;\quad i,m=1,2,
\]
on the boundary conditions
\[
\left. {\varphi _1 } \right|_{x=-\infty } =0\,,\;\,\left. {\;\varphi _{n+1}
} \right|_{x=\infty } =0.
\]
Further normalization
eigenfunctions is accepted by the following:
\[\varphi _{n+1} \left( {x,\lambda }
\right)=e^{ia_{n+1}^{-1} x\lambda }. \quad \varphi _{n+\mbox{1}}^\ast \left(
{x,\lambda } \right)=e^{-ia_{n+1}^{-1} x\lambda }.\]
Let direct $F_{n} $ and inverse $F_{n}^{-1} $ Fourier transforms on the Cartesian axis with $ n $
division points be defined by the rules in [10] :
\begin{equation}
F_{n} \left[ f \right]\,\left( \lambda \right)=\sum\limits_{m=0}^{n+1}
{\int\limits_{l_m-1 }^{l_{m} } \; } u_{m}^\ast \left( {\xi ,\lambda }
\right)\,f_{m} \left( \xi \right)d\xi \equiv \hat {f}\left( \lambda
\right),
\end{equation}
\begin{equation}
f_k \left( x \right)=\frac{1}{\pi i}\int\limits_0^\infty {u_k \left(
{x,\lambda } \right)\hat {f}\left( \lambda \right)\lambda d\lambda .}
\end{equation}

\section{Heat conduction in piecewise-homogeneous anisotropic media}

 Here the method of delta- functions \cite{len} is
the foundation for integral transforms. Namely kernels of the integral solutions mixed Cauchy's problem are the delta- functions.

This means that we fined the solution
of the separated matrix systems  $ (n +1) $ parabolic equations:
\begin{equation}
\left( {\frac{\partial }{\partial t}-A_j^2 \frac{\partial ^2}{\partial
x^2}-\Delta _y } \right)U_j \left( {t,x,y} \right)=0,\;\;\left( {t,x,y}
\right)\in D_+ \times R^m,\;\,j=\overline {1,n+1}
\end{equation}
bounded on the set $D \times R^m,D^+=\left( {0,\infty } \right)\times I_n ,\;\,$

where
\[
I_n =\left\{ {x:\;x\in \mathop
U\limits_{j=1}^{n+1} \left( {l_{j-1} ,l_j } \right),\;\,l_0 =-\infty
,\;\,l_{n+1} =\infty ,\;\,l_j <l_{j+1} ,\;\,j=\overline {1,n} } \right\}
\]
\[
\Delta _y =\frac{\partial ^2}{\partial y_1^2 }+\ldots +\frac{\partial
^2}{\partial y_m^2 },
\]

$A_j =\left( {a_{kl}^j } \right)$- positive definite  matrix $r\times r$,

to initial data
\begin{equation}
\left. {U_j \left( {t,x,y} \right)\,} \right|_{t=0} =g_j \left( {x,y}
\right),\;\,x\in I_n ,y\in R^m
\end{equation}

by edge condition
\begin{equation}
\left. {U_1 } \right|_{x=-\infty } =0\,,\;\,\left. {U_{n+1} }
\right|_{x=\infty } =0
\end{equation}

by the coupling condition

\begin{equation}
\left[ {\alpha _{m1}^k \frac{\partial }{\partial x}+\beta _{m1}^k }
\right]U_k =\left[ {\alpha _{m2}^k \frac{\partial }{\partial x}+\beta
_{m2}^k } \right]U_{k+1} ,
\end{equation}
\[
x=l_k ,\;\;k=\overline {1,n} ;\;\;m=\overline {1,\,2} ,
\]

here $U_{j}(t,x,y)$ is unknown vector- function, $g_{j}(x,y)$ given
vector- function, $\alpha _{mi}^k ,\;\,\beta _{mi}^k ,\\\;\,\gamma _{mi}^k
,\;\,\delta _{mi}^k $  an  matrix's $r\times r$.

Fourier integral with discontinuous coefficients
section 2 and the Fourier integral with non-separated variable
(1) - (2) give the idea of solving the problem (3) - (6). This solution takes the form:

\[
U_k \left( {t,x,y} \right)=-\frac{1}{\pi i}\frac{1}{\left( {\sqrt {2\pi } }
\right)^m}\int\limits_{R^m} {\sum\limits_{j=1}^{n+1} {\int\limits_{l_{j-1}
}^{l_{j} } {\mathop {\lim }\limits_{\tau \to 0} } } } \left(
{\int\limits_0^\infty {\frac{J_{\frac{m-2}{2}} \left( {\lambda \left|
{y-\eta } \right|} \right)}{\left| {y-\eta }
\right|^{^{\frac{m-2}{2}}}}e^{-\lambda \tau }\lambda ^{\frac{m}{2}}d\lambda
} } \right.\cdot\]
\begin{equation}
\cdot \int\limits_{-\infty }^\infty {e^{-\beta ^2t}\varphi _k \left( {x,\beta }
\right)} \varphi _j^\ast \left( {\xi ,\beta } \right)d\beta \left.
{{\begin{array}{*{20}c}
 \hfill \\
 \hfill \\
\end{array} }} \right)f_j \left( {\xi ,\eta } \right)d\xi d\eta ,
\quad
k=\overline {1,n+1} ,
\end{equation}
where $\varphi _k \left( {x,\beta } \right),\varphi _j^\ast \left( {\xi ,\beta }
\right)$ are the eigenfunctions of the direct and dual Sturm --Liouville problems,
respectively.

Suppose iterated integral

\[
\int\limits_0^\infty {\frac{J_{\frac{m-2}{2}} \left( {\lambda \left| {y-\eta
} \right|} \right)}{\left| {y-\eta } \right|^{^{\frac{m-2}{2}}}}e^{-\lambda
\tau }\lambda ^{\frac{m}{2}}d\lambda \int\limits_{-\infty }^\infty
{e^{-\beta ^2t}} } \varphi _k \left( {x,\beta } \right)\varphi _j^\ast \left( {\xi
,\beta } \right)d\beta
\]
is considered as a double then pass to the polar coordinates $$\lambda =\rho
\sin \varphi ,\beta =\rho \cos \varphi ;0\le \rho <\infty ,0\le \varphi \le \pi .$$
Then we prove:

\[
 \int\limits_0^\infty {\rho ^{\frac{m}{2}}\rho d\rho
{\int\limits_0^\pi {e^{-\rho ^2t\cos ^2\alpha }} } .\sin^{\frac{m}{2}}\alpha \frac{J_{\frac{m-2}{2}} \left( {\rho \sin \alpha \left|
{y-\eta } \right|} \right)}{\left| {y-\eta }
\right|^{^{\frac{m-2}{2}}}}e^{-\rho \tau \sin \alpha }} \cdot \\
\]
\[
 \cdot \varphi _k ( x,\rho \cos \alpha )\varphi _j^\ast
( \xi ,\rho \cos \alpha )d\alpha.
\]

If perform to the limit as $\tau\to 0 $ in the formula (9), then we obtain
\[
u_k \left( {t,x,y} \right)=-\frac{1}{\pi i}\frac{1}{\left( {\sqrt {2\pi } }
\right)^m}\int\limits_{R^m} {\sum\limits_{j=1}^{n+1} {\int\limits_{l_{j-1}
}^{l_{j} } {\int\limits_0^\infty e^{-\lambda^{2}  t}{\rho ^{\frac{m}{2}}\rho d\rho \left(
{\int\limits_0^\pi {si} } \right.n^{\frac{m}{2}}\alpha
\frac{J_{\frac{m-2}{2}} \left( {\rho \sin \alpha \left| {y-\eta } \right|}
\right)}{\left| {y-\eta } \right|^{^{\frac{m-2}{2}}}}} } } } \cdot
\]

\begin{equation}
\cdot \varphi _k \left( {x,\rho \cos \alpha } \right)\varphi _j^\ast \left( {\xi
,\rho \cos \alpha } \right)d\alpha \left. {{\begin{array}{*{20}c}
 \hfill \\
 \hfill \\
\end{array} }} \right)f_j \left( {\xi ,\eta } \right)d\xi d\eta .
\end{equation}
Here we use the notation
\[
\varphi _{k,j} \equiv\varphi _{k,j} \left( {\rho ,x,\xi ,\left| {y-\eta } \right|}
\right)=\int\limits_0^\pi {si} n^{\frac{m}{2}}\alpha \frac{J_{\frac{m-2}{2}}
\left( {\rho \sin \alpha \left| {y-\eta } \right|} \right)}{\left| {y-\eta }
\right|^{^{\frac{m-2}{2}}}}\cdot
\]
\[
\cdot \varphi _k \left( {x,\rho \cos \alpha } \right)\varphi _j^\ast \left( {\xi
,\rho \cos \alpha } \right)d\alpha
\]
I t is clear formula (10) be written as
\begin{equation}
\label{eq8}
u_k \left( {t,x,y} \right)=\frac{1}{\pi }\int\limits_0^\infty e^{-\rho^{2}t}\rho
^{\frac{m+1}{2}}d\rho  {\frac{1}{( \sqrt 2\pi )^m}\int\limits_{R^m} {\sum\limits_{j=1}^{n+1} {\int\limits_{l_{j-1}
}^{l_{j} } {\varphi _{k,j} } } } f_j ({\xi ,\eta })d\xi d\eta }.
\end{equation}
\section{The multidimensional Fourier transforms with discontinuous on planes}
If perform to the limit as $t\to 0 $ in the formula (11), then we obtain
\begin{equation}
\label{eq8}
f_k \left( {x,y} \right)=\frac{1}{\pi }\int\limits_0^\infty \rho
^{\frac{m+1}{2}}d\rho  {\frac{1}{( \sqrt 2\pi )^m}\int\limits_{R^m} {\sum\limits_{j=1}^{n+1} {\int\limits_{l_{j-1}
}^{l_{j} } {\varphi _{k,j} } } } f_j ({\xi ,\eta })d\xi d\eta }.
\end{equation}
Note that direct and inverse
multidimensional Fourier transforms with discontinuous on planes
$x =l_k$  are determined due to the integral identity (12):
\begin{equation}
F_n \left[ f \right]\left( {x,y,\lambda } \right)=\frac{1}{\left( {\sqrt
{2\pi } } \right)^m}\int\limits_{R^m} {\sum\limits_{j=1}^{n+1}
{\int\limits_{l_{j-1} }^{l_{j} } } } \varphi _{k,j} \left( {\lambda ,x,\xi ,\left|
{y-\eta } \right|} \right)f_j \left( {\xi ,\eta } \right)d\xi d\eta
,
\end{equation}
\\
\begin{equation}
f(x,y)=
\int\limits_0^\infty {\lambda ^{\frac{m}{2}+1}}F_n \left[ f \right]\left( {x,y,\lambda } \right)d\lambda ,
\end{equation}
respectively.

Finally we prove the basic integral identity for differential
operator
\[
\begin{array}{l}
 B=\,\theta \left( {l_1 -x} t\right)\left( {A_1^2 \frac{d^2}{dx^2}+\Delta _y
} \right)+\sum\limits_{k=1}^n {\,\theta \left( {x-l_{k-1} } \right)\,\theta
\left( {l_k -x} \right)\left( {A_k^2 \frac{d^2}{dx^2}+\Delta _y }
\right)\,+} \\
 +\,\theta \left( {x-l_n } \right)\left( {A_{n+1}^2 \frac{d^2}{dx^2}+\Delta
_y } \right).\\
 \end{array}
\]

\begin{thm} Suppose twice continuously differentiable on
$D_+ \times R^m$ vector- function

\[
f(x,y)=\sum_{k=1}^{n+1}\theta \left( {x-l_{k-1} }
\right)\theta \left( {l_k -x }
\right)\,f_{k} \left( {x,y} \right)+
\]

\[
+\theta \left( {x-l_n }
\right)\,f_{n+1} \left( {x,y} \right),
\]
satisfies the conditions at infinity:\\

$$f_{n+1}(x,y),\frac{\partial f_{n+1}(x,y)}{\partial x}$$
 limits to zero as $x\rightarrow+\infty,y$ -is fixed,\\

$$f_{1}(x,y),\frac{\partial f_{1}(x,y)}{\partial x}$$
limits to zero as  $x\rightarrow-\infty,y$ -is fixed,\\

$$f_{i}(x,y),\frac{\partial f_{i}(x,y)}{\partial y_{j}}$$
limits to zero as $y_{j}\rightarrow\pm\infty;x,y_{1},y_{2},...,y_{j-1},y_{j+1},...,y_{m}$ -are fixed,

the coupling conditions (8) are performed

then hold true:

\[
F_n \left[ {B\left( f \right)} \right]=-\lambda ^2F_n \left[ f \right].
\]
\end{thm}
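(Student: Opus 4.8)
The plan is to evaluate $F_n[B(f)]$ directly from the definition~(13), to move the operator $B$ off $f$ and onto the kernel $\varphi_{k,j}$ by integration by parts, to dispose of the resulting boundary contributions by means of the decay hypotheses and the coupling conditions, and finally to invoke the fact that $\varphi_{k,j}$ is itself an eigenfunction of $B$ with eigenvalue $-\lambda^{2}$.

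First I would note that, since the Heaviside factors in $B$ merely restrict the $j$-th term to the interval $(l_{j-1},l_{j})$, the definition~(13) yields
\[
F_n[B(f)](x,y,\lambda)=\frac{1}{(\sqrt{2\pi})^{m}}\int_{R^{m}}\sum_{j=1}^{n+1}\int_{l_{j-1}}^{l_{j}}\varphi_{k,j}(\lambda,x,\xi,|y-\eta|)\left(A_j^{2}\frac{\partial^{2}f_j}{\partial\xi^{2}}+\Delta_\eta f_j\right)d\xi\,d\eta .
\]
The two summands of $B$ are treated separately. For $\Delta_\eta f_j$ I would integrate by parts twice in $\eta$ over $R^{m}$: since $\varphi_{k,j}$ is bounded while $f_i$ and $\partial f_i/\partial y_j$ tend to $0$ as $y_j\to\pm\infty$, every boundary integral vanishes and $\Delta_\eta$ passes onto $\varphi_{k,j}$, which depends on $\eta$ only through $|y-\eta|$. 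For $A_j^{2}\partial_\xi^{2}f_j$ I would integrate by parts twice in $\xi$ on each interval $(l_{j-1},l_{j})$: the contributions at $x=\pm\infty$ drop out because $f_1,f_{n+1}$ and their $\xi$-derivatives vanish there while $\varphi_{k,j}$ stays bounded, and $A_j^{2}\partial_\xi^{2}$ passes onto $\varphi_{k,j}$.

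The delicate step, and the one I expect to be the main obstacle, is the cancellation of the boundary terms produced at the interior interfaces $x=l_k$. Grouping, for each $k$, the contribution from the right end of $(l_{k-1},l_{k})$ with that from the left end of $(l_{k},l_{k+1})$ leaves a Wronskian-type bilinear expression in the pairs $(\varphi_k^{\ast},f_k)$ and $(\varphi_{k+1}^{\ast},f_{k+1})$ evaluated at $x=l_k$. I would prove that this expression is zero by combining the coupling conditions~(8) satisfied by $f$ at $x=l_k$ with the dual coupling conditions imposed on $\varphi^{\ast}$ in Section~2; the determinant factors $\Delta_{i,k}$ there are arranged precisely so that this Lagrange identity holds at every interface, i.e.\ so that $B$ is conjugated to its formal adjoint across each $l_k$. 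The kernel $\varphi_{k,j}$ inherits these interface relations because its entire $\xi$-dependence is carried by the factor $\varphi_j^{\ast}(\xi,\lambda\cos\alpha)$ occurring in its $\alpha$-representation.

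After these integrations by parts, $F_n[B(f)]$ has exactly the shape of $F_n[f]$ in~(13), but with $B$ now acting on the kernel in the variables $(\xi,\eta)$. It then remains to check the eigenfunction identity
\[
\left(A_j^{2}\frac{\partial^{2}}{\partial\xi^{2}}+\Delta_\eta\right)\varphi_{k,j}(\lambda,x,\xi,|y-\eta|)=-\lambda^{2}\,\varphi_{k,j}(\lambda,x,\xi,|y-\eta|).
\]
Inside the integrand of the $\alpha$-representation of $\varphi_{k,j}$, the factor $\varphi_j^{\ast}(\xi,\lambda\cos\alpha)$ satisfies $\left(A_j^{2}\partial_\xi^{2}+(\lambda\cos\alpha)^{2}\right)\varphi_j^{\ast}=0$ by construction, whereas $|y-\eta|^{-(m-2)/2}J_{(m-2)/2}(\lambda\sin\alpha\,|y-\eta|)$ is, as recalled in the Introduction, an eigenfunction of $\Delta_\eta$ on $R^{m}$ with eigenvalue $-(\lambda\sin\alpha)^{2}$. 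Adding the two contributions multiplies the integrand by the $\alpha$-independent constant $-(\lambda\cos\alpha)^{2}-(\lambda\sin\alpha)^{2}=-\lambda^{2}$, which therefore comes out of the integration over $[0,\pi]$. Substituting this identity into the transformed integral gives $F_n[B(f)]=-\lambda^{2}F_n[f]$, which is the asserted basic identity.
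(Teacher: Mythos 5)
Your proposal follows essentially the same route as the paper's own proof: transfer $B$ onto the kernel by two integrations by parts in $\xi$ and in $\eta$, kill the boundary terms at infinity via the decay hypotheses and the interface terms via the coupling conditions together with the dual coupling conditions on $\varphi_j^{\ast}$, and then use the $\alpha$-representation of $\varphi_{k,j}$, in which the Bessel factor contributes $-\lambda^{2}\sin^{2}\alpha$ under $\Delta_\eta$ and the factor $\varphi_j^{\ast}(\xi,\lambda\cos\alpha)$ contributes $-\lambda^{2}\cos^{2}\alpha$ under $A_j^{2}\partial_\xi^{2}$, so the integrand is multiplied by the constant $-\lambda^{2}$. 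The only difference is that you make explicit the Wronskian-type cancellation at the interfaces $x=l_k$, which the paper dispatches in a single sentence, so your argument is a correct and somewhat more detailed rendering of the same proof.
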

\begin{proof}
Let us twice integrate   by parts with respect to each of the variables in the left part. Further outside the integral terms are disappeared by the conditions at infinity, connection summands  are disappeared by the coupling conditions. Therefore the operator $B$ is placed as the kernel:
\[F_n \left[ B(f) \right]\left( {x,y,\lambda } \right)=\frac{1}{\left( {\sqrt
{2\pi } } \right)^m}\int\limits_{R^m} {\sum\limits_{j=1}^{n+1}
{\int\limits_{l_{j-1} }^{l_{j} } } } B_{j}\left[\varphi _{k,j} \left( {\lambda ,x,\xi ,\left|
{y-\eta } \right|} \right)\right]f_j \left( {\xi ,\eta } \right)d\xi d\eta
.
\]
Hence the equality  $B_{j}[\varphi_{k,j}]=-\lambda^{2}\varphi_{k,j}$  is proved. 	
We shall prove that |
\[
B_{j}[\varphi _{k,j}] =\int\limits_0^\pi {si} n^{\frac{m}{2}}\alpha \Delta_{\eta}\left(\frac{J_{\frac{m-2}{2}}
\left( {\rho \sin \alpha \left| {y-\eta } \right|} \right)}{\left| {y-\eta }
\right|^{^{\frac{m-2}{2}}}}\right)\cdot
\]
\[
\cdot \varphi _k \left( {x,\rho \cos \alpha } \right)\left(\varphi _j^\ast \left( {\xi
,\rho \cos \alpha } \right)\right)d\alpha+
\]
\[
 +\int\limits_0^\pi {si} n^{\frac{m}{2}}\alpha \left(\frac{J_{\frac{m-2}{2}}
\left( {\rho \sin \alpha \left| {y-\eta } \right|} \right)}{\left| {y-\eta }
\right|^{^{\frac{m-2}{2}}}}\right)\cdot
\]
\[
\cdot \varphi _k \left( {x,\rho \cos \alpha } \right)a_{j}^{2}\frac{\partial^{2}}{\partial\xi^{2}}\left(\varphi _j^\ast \left( {\xi
,\rho \cos \alpha } \right)\right)d\alpha=
\]
\[
=-\rho^{2} \sin^{2} \alpha\int\limits_0^\pi {si} n^{\frac{m}{2}}\alpha \left(\frac{J_{\frac{m-2}{2}}
\left( {\rho \sin \alpha \left| {y-\eta } \right|} \right)}{\left| {y-\eta }
\right|^{^{\frac{m-2}{2}}}}\right)\cdot
\]
\[
\cdot \varphi _k \left( {x,\rho \cos \alpha } \right)\left(\varphi _j^\ast \left( {\xi
,\rho \cos \alpha } \right)\right)d\alpha+
\]
\[
 -\rho^{2} \cos^{2} \alpha\int\limits_0^\pi {si} n^{\frac{m}{2}}\alpha \left(\frac{J_{\frac{m-2}{2}}
\left( {\rho \sin \alpha \left| {y-\eta } \right|} \right)}{\left| {y-\eta }
\right|^{^{\frac{m-2}{2}}}}\right)\cdot
\]
\[
\cdot \varphi _k \left( {x,\rho \cos \alpha } \right)\left(\varphi _j^\ast \left( {\xi
,\rho \cos \alpha } \right)\right)d\alpha=-\rho^{2}\varphi _{k,j}.
\]
We take into consideration that in this proof $\varphi _j^\ast \left( {\xi
,\rho \cos \alpha } \right)$ are  dual Sturm-Liouville problems eigenfunctions.
$$
\Delta_{\eta}\left(\frac{J_{\frac{m-2}{2}}
\left( {\rho \sin \alpha \left| {y-\eta } \right|} \right)}{\left| {y-\eta }
\right|^{^{\frac{m-2}{2}}}}\right)=-\rho^{2} \sin^{2} \alpha\left(\frac{J_{\frac{m-2}{2}}
\left( {\rho \sin \alpha \left| {y-\eta } \right|} \right)}{\left| {y-\eta }
\right|^{^{\frac{m-2}{2}}}}\right),
$$
By basis identity \cite{9} we conclude
$$
\frac{\rho^{\frac{m}{2}}J_{\frac{m-2}{2}}
\left( {\rho \left| {y } \right|} \right)}{\left| {y }
\right|^{^{\frac{m-2}{2}}}}=\frac{1}{(2\pi)^{\frac{m}{2}}}\int_{S_{\rho}}e^{i<y,\xi>}dS_{\rho}.
$$
\end{proof}
This completes the proof.

Specifically the formulas for the direct and inverse Fourier transforms with
non- separated  variables are significantly simplified in the case  of ideal coupling conditions on one surface. This case is the most
distributed in engineering practice. 	
As an example the scalar case is considered.
Suppose the ideal coupling conditions are in the plane $x=0$
\[
 \varphi_1 \left( {x,y} \right)=\varphi_2 \left( {x,y} \right),x=0,y\in R^m;
 \]
 \[
 {\varphi}'_{1x} \left( {x,y} \right)=\nu {\varphi}'_{2x} \left( {x,y} \right),x=0,y\in
R^m;\nu =\frac{\lambda _2 }{\lambda _1 }
\]
then the analytical expressions for the one-dimensional components
eigenfunctions are in \cite {len}:
\[
 \varphi _1 \left( {x,\lambda } \right)=\left( {\cos \lambda \frac{x}{a_1
}+i\frac{1}{\sqrt {\delta _0 } }\sin \lambda \frac{x}{a_1 }} \right)\left(
{1+\delta _0 } \right);
\]
\[
 \varphi _2 \left( {x,\lambda } \right)=\left( {\cos \lambda \frac{x}{a_2
}+i\sqrt {\delta _0 } \sin \lambda \frac{x}{a_2 }} \right)\left( {1+\delta
_0 } \right);
 \]
\[
\varphi _k^\ast \left( {x,\lambda } \right)=r_k \overline {\varphi _k \left(
{x,\lambda } \right)} ,k=1,2,r_1 =\frac{a_2 }{\nu _0 a_1^2 },r_2
=\frac{1}{a_2 },\delta _0 =\frac{a_2 }{\nu _0 a_1 }.
\]
It is clear that the expressions for the multidimensional components of eigenfunctions with
non- separate variables $\varphi_ {kj}$ have the form:
\[
\varphi _{11} =\frac{1+\delta _0 }{a_1 }\frac{J_{\frac{m-1}{2}} \left( {\rho
\sqrt {\frac{\left( {x-\xi } \right)^2}{a_1^2 }+\left| {y-\eta } \right|^2}
} \right)}{\left( {\frac{\left( {x-\xi } \right)^2}{a_1^2 }+\left| {y-\eta }
\right|^2} \right)^{^{\frac{m-1}{2}}}}-\frac{1-\delta _0 }{a_1
}\frac{J_{\frac{m-1}{2}} \left( {\rho \sqrt {\frac{\left( {x+\xi }
\right)^2}{a_1^2 }+\left| {y-\eta } \right|^2} } \right)}{\left(
{\frac{\left( {x+\xi } \right)^2}{a_1^2 }+\left| {y-\eta } \right|^2}
\right)^{^{\frac{m-1}{2}}}},
\]
\[
\varphi _{12} =\frac{1+\delta _0 }{a_2 \sqrt {\delta _0 }
}\frac{J_{\frac{m-1}{2}} \left( {\rho \sqrt {\left( {\frac{x}{a_2
}-\frac{\xi }{a_1 }} \right)^2+\left| {y-\eta } \right|^2} } \right)}{\left(
{\left( {\frac{x}{a_2 }-\frac{\xi }{a_1 }} \right)^2+\left| {y-\eta }
\right|^2} \right)^{^{\frac{m-1}{2}}}}+
\]
\[
+\frac{1-\delta _0 }{a_2 \sqrt {\delta
_0 } }\frac{J_{\frac{m-1}{2}} \left( {\rho \sqrt {\left( {\frac{x}{a_2
}+\frac{\xi }{a_1 }} \right)^2+\left| {y-\eta } \right|^2} } \right)}{\left(
{\left( {\frac{x}{a_2 }+\frac{\xi }{a_1 }} \right)^2+\left| {y-\eta }
\right|^2} \right)^{^{\frac{m-1}{2}}}},
\]
\[
\varphi _{21} =\sqrt {\delta _0 } \frac{1+\delta _0 }{a_1
}\frac{J_{\frac{m-1}{2}} \left( {\rho \sqrt {\left( {\frac{x}{a_1
}-\frac{\xi }{a_2 }} \right)^2+\left| {y-\eta } \right|^2} } \right)}{\left(
{\left( {\frac{x}{a_1 }-\frac{\xi }{a_2 }} \right)^2+\left| {y-\eta }
\right|^2} \right)^{^{\frac{m-1}{2}}}}+
\]
\[
+\sqrt {\delta _0 } \frac{1-\delta _0
}{a_1 }\frac{J_{\frac{m-1}{2}} \left( {\rho \sqrt {\left( {\frac{x}{a_1
}+\frac{\xi }{a_2 }} \right)^2+\left| {y-\eta } \right|^2} } \right)}{\left(
{\left( {\frac{x}{a_1 }+\frac{\xi }{a_2 }} \right)^2+\left| {y-\eta }
\right|^2} \right)^{^{\frac{m-1}{2}}}},
\]
\[
\varphi _{22} =\frac{1+\delta _0 }{a_2 \delta _0 }\frac{J_{\frac{m-1}{2}}
\left( {\rho \sqrt {\frac{\left( {x-\xi } \right)^2}{a_2^2 }+\left| {y-\eta
} \right|^2} } \right)}{\left( {\frac{\left( {x-\xi } \right)^2}{a_2^2
}+\left| {y-\eta } \right|^2} \right)^{^{\frac{m-1}{2}}}}-\frac{1-\delta _0
}{a_2 \delta _0 }\frac{J_{\frac{m-1}{2}} \left( {\rho \sqrt {\frac{\left(
{x+\xi } \right)^2}{a_2^2 }+\left| {y-\eta } \right|^2} } \right)}{\left(
{\frac{\left( {x+\xi } \right)^2}{a_2^2 }+\left| {y-\eta } \right|^2}
\right)^{^{\frac{m-1}{2}}}}.
\]
This shows that integral transforms by formulas (12) - (13) are constructed.
\section{Conclusion}
Let us remark that integral transforms (12) - (13) are used in solving problems of mathematical physics by the standard algorithm. If we find the solution in the images then return to the original. We stress if one spectral parameter involved in the final formula then the practical profit  is achieved. At the same time the integral transforms with separate variables contain $m$ parameters.

\end{document}